\let\csname equation*\endcsname\relax
\let\csname endequation*\endcsname\relax
\newcommand \bes {\begin{equation}\begin{split}}
\newcommand \be {\begin{equation}}
\newcommand \ee {\end{equation}}
\newcommand \ees {\end{split}\end{equation}}
\newtheorem{theorem}{Theorem}[section]
\newtheorem{prop}[theorem]{Proposition}
    \newcommand{\diff}{\mathrm{d}}
\newcommand{\bfx}{\boldsymbol{x}}
\newcommand{\bfJ}{\boldsymbol{J}}
\begin{document}

\title[A Non-Isothermal Phase-field crystal Model with Lattice Expansion]
{A Non-Isothermal Phase-Field Crystal Model with Lattice Expansion: Analysis and Benchmarks}

\author{Maik Punke$^1$\,\orcidlink{0000-0002-3564-7942},  Marco Salvalaglio$^{1,2}$\,\orcidlink{0000-0002-4217-0951}, Axel Voigt$^{1,2}$\,\orcidlink{0000-0003-2564-3697}, Steven M. Wise$^3$\,\orcidlink{0000-0003-3824-2075}}
\address{$^1$Institute of Scientific Computing, TU Dresden, 01062 Dresden, Germany.}
\address{$^2$Dresden Center for Computational Materials Science, TU Dresden, 01062 Dresden, Germany.}
\address{$^3$Department of Mathematics, The University of Tennessee, Knoxville, TN 37996, USA.}
\ead{maik.punke@tu-dresden.de}
\vspace{10pt}

\begin{abstract} 
We introduce a non-isothermal phase-field crystal model including heat flux and thermal expansion of the crystal lattice. The thermal compatibility condition, as well as a positive entropy-production property, is derived analytically and further verified  by numerical benchmark simulations. Furthermore, we examine how the different model parameters control density and temperature evolution during dendritic solidification through extensive parameter studies. Finally, we extend our framework to the modeling of open systems considering external mass and heat fluxes. This work sets the ground for a comprehensive mesoscale model of non-isothermal solidification including thermal expansion within a positive entropy-producing framework, and provides a benchmark for further meso- to macroscopic modeling of solidification.
\end{abstract}

\noindent{\it Keywords}: Solidification, crystal growth, heat flux, phase-field crystal, entropy production

    \section{Introduction}
Solidification of crystalline materials is a ubiquitous phenomenon with wide-ranging implications  in both nature and technology. It involves the interaction of different instabilities and out-of-equilibrium growth conditions~\cite{Langer1980}, thus being characterized by competing physical mechanisms and complex morphologies. 
Managing solidification processes is crucial for conventional technological applications~\cite{flemings1974solidification,dantzig2016solidification} as well as  self-assembly approaches~\cite{Stangl2004,polshettiwar2009self}. At early stages, crystalline seeds grow following nucleation or on pre-existing crystal phases/seeds, while later stages are affected by capillarity, elasticity, plasticity, as well as various kinetic effects~\cite{BOETTINGER200043,hoyt2003atomistic,Jaafar2017,Granasy2019,Alexandrov2021}. Throughout the entire solidification process, heat transfer within the solid and liquid phases significantly impacts the morphology of crystal growth, grain size, and phase distribution~\cite{chalmers1964principles,Jaafar2017,Granasy2019,Alexandrov2021}. On a microscopic level, the arrangement of atoms in a periodic lattice usually results in anisotropic behaviors, such as faceting, and affects the nucleation and movement of defects. These phenomena are closely related to crystallographic directions and are influenced by thermal gradients within the material.

The modeling of solidification has been addressed by various approaches resolving different time- and length-scales of interest. At the scale of individual atoms, microscopic approaches like molecular dynamics techniques~\cite{alder1959studies} describe lattice-dependent features such as anisotropies and defect structures~\cite{hoyt2003atomistic}. Due to the limited scaling properties of these models, the growth of crystals, which involves long timescales, is typically not accessible. Macroscopic approaches like sharp-interface and phase-field methods~\cite{KOBAYASHI1993410,Karma1998,Zhu2007,Steinbach2009,Pan2010,takaki2014phase,kaiser2020} coping with large systems and long timescales proved successfully in describing the main features of crystal growth. However, a direct connection to the lattice symmetry and microscopic features is not inherently and self-consistently captured in general and needs to be included through parameters and additional functions, e.g., as anisotropic interface energies~\cite{SUZUKI2002125,Tor2009,salvalaglio2015faceting}.

The so-called phase-field crystal (PFC) model~\cite{Elder2002,Elder2004,Provatas2010,Emmerich2012}
emerged as a prominent approach to describe crystal structures at large (diffusive) timescales through a continuous, periodic order parameter representing the atomic density. It reproduces the main phenomenology for crystalline systems, including solidification and crystal growth, capillarity, lattice deformations as well as nucleation and defect kinematics. The PFC model describes self-consistently anisotropies resulting from the lattice structure~\cite{Backofen_2009,PODMANICZKY2014148,Ofori-Opoku2018} and inherently includes  elasticity effects consistent with continuum mechanics with high accuracy reached when considering dedicated PFC model extensions~\cite{skogvoll2022hydrodynamic,HeinonenPRE2014,HeinonenPRL2016,SkaugenPRL2018,SkaugenPRB2018,SalvalaglioNPJ2019,chockalingam2021elastic}.
Therefore, the PFC concept represents a comprehensive framework for the description of crystalline materials in two- and three-dimensions~\cite{Tegze2011,TANG2011146,Emmerich2012}.

Only recently, further extensions of the PFC model have been developed to explicitly describe temperature and heat flux phenomena~\cite{Kocher2019,wang2021thermodynamically,punke2022explicit,punke2023improved,burns2024two}. However, thermal expansion or compression of the equilibrium crystal lattice length due to temperature fluctuations is either not considered~\cite{Kocher2019,wang2021thermodynamically,burns2024two} or a positive entropy-production rate cannot be guaranteed~\cite{punke2022explicit}.

This work presents a non-isothermal PFC model with thermal expansion of the crystal lattice within a positive entropy-producing framework. As analytical benchmarks, we derive the thermal compatibility condition as well as a positive entropy-production property in section~\ref{sec:analysis} which is further verified by  
numerical benchmark simulations in section~\ref{sec:benchmarks}.  Furthermore, the role of different model parameters is examined through parameter studies focusing on dendritic solidification, as shown in section~\ref{sec:parameterStudy}. Finally, we extend our framework to the description of external mass and heat fluxes entering the system via the domain boundary, mimicking the physics of open systems as illustrated in section~\ref{sec:openSystems}. The derived analytical results, along with the proposed simulation setups, offer valuable benchmarks for PFC and other solidification modeling approaches that incorporate heat flux in both open and closed systems.

\section{Model introduction and analysis}
\label{sec:analysis}
The phase-field crystal (PFC) model~\cite{Elder2002,Elder2004,Emmerich2012}
describes crystal structures at diffusive timescales through a continuous, periodic order parameter $\psi \equiv \psi(\bfx,t)$ representing in its dimensionless formulation the atomic number difference with respect to the liquid phase. It is based on a Swift-Hohenberg-like free energy functional which can be written 
    \begin{equation}
F\left[\psi\right] = \int_\Omega \left( \dfrac{\lambda-\kappa}{2}\psi^2 -  \delta \dfrac{\psi^3}{6} + \dfrac{\psi^4}{12}+ \dfrac{\kappa}{2}\psi\, \mathcal{L}\, \psi\right) \,\diff\bfx,   
    \label{eq:F1}
    \end{equation}
with $\delta,\,\lambda,\,\kappa \geq 0$ parameters characterizing the phase space and material properties together with the global average density $\Psi=\frac{1}{|\Omega|}\int_\Omega \psi\,  \rm{d}\bfx$, and $\Omega \in \mathbb{R}^n$ the domain of definition of $\psi$ with $n=2$ defined here as $\Omega =[-L_x/2,L_x/2]\times [-L_y/2,L_y/2]$. The differential operator $\mathcal{L}$ approximates a two-point correlation function and thus encodes the crystal symmetry. For instance, 2D triangular symmetry is obtained as equilibrium state for some free-energy parameters with $\mathcal{L}=(1+\nabla^2)^2$~\cite{Emmerich2012}. 

Together with appropriate boundary and initial conditions, the dynamical equation for $\psi$ can be generally described by a conservative gradient flow of $F$, that is,
    \begin{equation}
    \begin{split}
\partial _t \psi &= \nabla\cdot\left(M(\psi) \nabla\dfrac{\delta F\left[\psi\right]}{\delta \psi}\right),
    \end{split}
    \label{eq:pfc}
    \end{equation}
where $M(\psi)>0$ is a mobility function. Further extensions of~\eqref{eq:F1}-\eqref{eq:pfc} may be readily considered to account for other lattice symmetries~\cite{Greenwood2010,Greenwood2011,Mkhonta2013} (in both 2D and 3D). In the following, we extend the classical PFC model to a non-isothermal framework, including thermal lattice expansion, that satisfies a positive entropy-production property.

    \subsection{Energy and Entropy Densities}
Let $T$ be the dimensionless temperature field with $T=1$ representing the melting point. It is assumed that the free energy $F$, entropy $S$, and internal energy $E$ have the generic forms
\begin{equation*}
    \begin{split}
        &F = \int_\Omega\hat{f}(\psi,\nabla \psi, \nabla^2 \psi, T)\,\diff\bfx,\\
        &S = \int_\Omega\hat{s}(\psi,\nabla \psi, \nabla^2 \psi, T)\,\diff\bfx,\\
        &E = \int_\Omega\hat{e}(\psi,\nabla \psi, \nabla^2 \psi, T)\,\diff\bfx.
    \end{split}
\end{equation*}
In Ref.~\cite{punke2022explicit}, we introduced a PFC model with a free energy featuring explicit coupling with the temperature field under periodic boundary conditions for both $\psi$ and $T$. In the current work, we start with the same free energy as in~\cite{punke2022explicit}, since its functional form is well understood, and then derive the entropy and internal energy densities $\hat{s}$ and $\hat{e}$ subsequently. The free energy density $\hat{f}$ is written as
    \begin{equation}
\hat{f}(\psi,\nabla \psi, \nabla^2 \psi, T) = f(\psi,T) + \frac{\kappa T}{2}\left(\alpha^2(T) \psi^2- 2  \alpha(T) \left| \nabla \psi\right|^2 +\left(\nabla^2 \psi\right)^2 \right) ,
    \label{eqn:free-energy-density-pfc}
    \end{equation}
with
    \[
f(\psi,T) = k_{\rm B} T \left(\dfrac{\lambda-\kappa}{2}\psi^2 -  \delta \dfrac{\psi^3}{6} + \dfrac{\psi^4}{12}\right)
- C_vT\log\left(\frac{T}{T_0}\right) -\beta \psi,
    \]
where $k_{\rm B}$ is Boltzmann's constant, hereafter set to $k_{\rm B}\equiv 1$ for simplicity, $C_v>0$ is the constant heat capacity of the material, $\beta>0$ is a constant, and $\alpha(T)>0$ measures the degree of thermal expansion and has physical units (before non-dimensionalization) of $\mathrm{length}^{-2}$. Here, we use the explicit model
   \begin{equation}
\alpha(T) =  \frac{1}{\left(a_0 + a_1 (T - T_0) \right)^2}.
    \label{eq:thermalExpansion}
    \end{equation}
$a_0>0$ is a reference length (set to $a_0=1$), and $a_1\ge 0$ is the thermal expansion coefficient, describing an expansion or compression of the crystal lattice, depending upon whether the temperature $T$ is larger or smaller than the reference temperature, $T_0 >0$. When thermal expansion is neglected ($\alpha\equiv\alpha_0>0$),  the form of the free energy assumed in~\cite{wang2021thermodynamically} is recovered.
We use the standard thermodynamic assumptions that 
    \begin{equation}
\hat{s} = -\partial_T\hat{f} \quad \mbox{and} \quad \hat{e} = \hat{f}+T\hat{s} .
    \label{eqn:thermo-4}
    \end{equation}
The second condition of \eqref{eqn:thermo-4} encodes a basic assumption of non-equilibrium thermodynamics, namely, that the energy, entropy, and free energy are related locally as if the system were near equilibrium at every point.

As a consequence of \eqref{eqn:thermo-4}, we can derive the following thermal compatibility condition, which will be exploited later:
    \begin{prop}
    \label{prop:thermal-comp-2}
Let the free energy density $\hat{f}$, defined as in \eqref{eqn:free-energy-density-pfc}, be a twice continuously differentiable function of temperature, $T$. Suppose that the entropy and internal energy densities $\hat{s}$ and $\hat{e}$ are computed as in \eqref{eqn:thermo-4}. Then,
    \begin{equation}
\frac{1}{T} \frac{\partial \hat{e}}{\partial T} = \frac{\partial\hat{s}}{\partial T} .
    \label{eqn:thermal-comp-2}
    \end{equation}
    \end{prop}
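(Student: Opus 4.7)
The plan is to proceed by direct computation, using only the two defining relations in \eqref{eqn:thermo-4} together with the $C^2$ regularity of $\hat{f}$ in $T$. I would not need to invoke the specific functional form \eqref{eqn:free-energy-density-pfc} at all; any $\hat{f}$ satisfying the hypotheses yields the identity, which I think is worth flagging as a remark since it shows \eqref{eqn:thermal-comp-2} is purely a thermodynamic consequence of the Helmholtz relation $\hat{e}=\hat{f}+T\hat{s}$ with $\hat{s}=-\partial_T\hat{f}$.

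The concrete steps are as follows. First, differentiate $\hat{e}=\hat{f}+T\hat{s}$ with respect to $T$, holding $\psi,\nabla\psi,\nabla^2\psi$ fixed, to obtain
\begin{equation*}
\frac{\partial \hat{e}}{\partial T}=\frac{\partial \hat{f}}{\partial T}+\hat{s}+T\,\frac{\partial \hat{s}}{\partial T}.
\end{equation*}
Second, substitute the entropy relation $\hat{s}=-\partial_T\hat{f}$ into the first term on the right; the $\partial_T\hat{f}$ and $\hat{s}$ contributions cancel, leaving
\begin{equation*}
\frac{\partial \hat{e}}{\partial T}=T\,\frac{\partial \hat{s}}{\partial T}.
\end{equation*}
Dividing through by $T>0$ yields \eqref{eqn:thermal-comp-2}.

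The only subtlety, and what the $C^2$ hypothesis is there to guarantee, is that the derivative $\partial_T \hat{s}=-\partial_T^2 \hat{f}$ is well-defined and continuous, so the chain/product rule applications above are legitimate and no care about order of differentiation is needed. I would briefly note that positivity of $T$ is implicit (the dimensionless temperature represents an absolute temperature normalized by the melting point, so $T>0$ is assumed throughout), which is needed for the final division. There is no genuine obstacle here — the result is essentially a one-line manipulation — so the main thing the write-up has to do is make clear which variables are held fixed in each partial derivative and that the algebra is a direct consequence of \eqref{eqn:thermo-4} with no further structure of $\hat{f}$ required.
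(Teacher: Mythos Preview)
Your proposal is correct and follows exactly the same route as the paper: differentiate $\hat{e}=\hat{f}+T\hat{s}$ in $T$, substitute $\hat{s}=-\partial_T\hat{f}$ to cancel two terms, and conclude. Your additional remarks on the role of the $C^2$ hypothesis, the positivity of $T$, and the fact that the specific form of $\hat{f}$ is never used are all accurate and worth including.
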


    \begin{proof}
Using \eqref{eqn:thermo-4}, we have
    \begin{align*}
\partial_T\hat{e} & = \partial_T\hat{f} + \hat{s} + T\partial_T\hat{s}
    \\
& = -\hat{s} + \hat{s} + T\partial_T\hat{s}
    \\
& = T\partial_T\hat{s},
    \end{align*}
which easily  proves the result.
    \end{proof}
Using the explicit form of $\hat{f}$ from equations~\eqref{eqn:free-energy-density-pfc}, the entropy density $\hat{s}$ and the energy density $\hat{e}$ can be written as
    \begin{align*}
\hat{s} =& -\partial_T f - \frac{\kappa  }{2}\left(\alpha^2(T) \psi^2- 2 \alpha(T) \left| \nabla \psi\right|^2 +\left(\nabla^2 \psi\right)^2 \right) \dots 
    \\
& - \frac{\kappa T}{2}\left(2\alpha(T)\alpha'(T) \psi^2- 2\alpha'(T) \left| \nabla \psi\right|^2 \right)  ,\\
   \hat{e} =& f-T \partial_T f - \frac{\kappa T^2}{2} \left( 2\alpha(T) \alpha'(T) \psi^2 - 2\alpha'(T) \left| \nabla \psi\right|^2 \right),
 \end{align*}
leading to     
    \[
\hat{e} = C_vT - \beta \psi  - \frac{\kappa T^2}{2} \left( 2\alpha(T) \alpha'(T) \psi^2 - 2\alpha'(T) \left| \nabla \psi\right|^2 \right).
    \]
In those cases where the thermal expansion may be neglected, that is, $\alpha'(T) \equiv 0$, as in~\cite{wang2021thermodynamically}, the internal energy density $\hat{e}$ simplifies to
    \[
\hat{e} = C_vT - \beta \psi.
    \]
For the general case, we make the following definitions:   
    \begin{equation}\label{eq:gammas}
\gamma_0 (T) := \kappa  T^2 \alpha(T) \alpha'(T) \quad \mbox{and} \quad  \gamma_1(T) := \kappa T^2\alpha'(T).
    \end{equation}
The internal energy density can then be expressed as 
    \[
\hat{e} = C_vT - \beta \psi  - \gamma_0(T) \psi^2 + \gamma_1(T) \left| \nabla \psi\right|^2 .
    \]

    \subsection{Mass and Internal Energy Conservation}

We assume that the flow of heat is dominated by diffusive flux. The first law of thermodynamics, which states that energy must be conserved, then takes the form 
    \begin{equation}
\partial_t\hat{e} = -\nabla\cdot\bfJ_e,
    \label{eqn:energy-conservation}
    \end{equation}
where  $\bfJ_e$ is the flux of energy, here chosen as
    \[
\bfJ_e = \widehat{M}_T \nabla\left( \frac{1}{T} \right),
    \]
where $\widehat{M}_T>0$ is a thermal energy mobility coefficient so that the energy conservation equation is
    \begin{align}
C_v\partial_tT -\beta\partial_t \psi  =& -\nabla\left( \widehat{M}_T \nabla\left(\frac{1}{T}\right)\right) + \gamma_0'(T) \psi^2 \,  \partial_t T - \gamma_1'(T) |\nabla \psi|^2 \, \partial_tT
    \nonumber
    \\
& + 2 \gamma_0(T) \psi \,  \partial_t \psi - 2\gamma_1(T) \nabla \psi \cdot \nabla \partial_t \psi.
    \label{eqn:energy-con-2}
    \end{align}
Based on the general form of the energy equation, it follows directly that the first law of thermodynamics holds globally, $\partial_t E = 0$. 

The generic form of the mass conservation equation is
    \[
\partial_t  \psi  = -\nabla\cdot\bfJ_\psi,
    \]
with mass flux $\bfJ_\psi$, imposing $\partial_t \int_\Omega \psi(\bfx,t)\, \diff\bfx= 0$. For  $\bfJ_\psi$, we assume  $\bfJ_\psi=-M_\psi  \nabla w$  with  mass mobility coefficient $M_\psi>0$ and generalized chemical potential $w$:
    \begin{align}
\partial_t  \psi & = \nabla\cdot \left(M_\psi  \nabla w\right), & \mbox{(mass conservation)} &
    \label{eqn:mass-con-3}
    \\
w & = \partial_\psi \left(\frac{\hat{f}}{T}\right) - \nabla\cdot\left(\partial_{\nabla \psi}\left(\frac{\hat{f}}{T}\right)  \right) +\kappa  \nabla^4 \psi, & \mbox{(generalized chemical potential)} &
    \label{eqn:mass-con-4}
    \end{align}
We observe that
    \[
\frac{\hat{f}}{T} = \widetilde{f}(\psi) - C_v\log\left(\frac{T}{T_0}\right) -\beta \frac{\psi}{T}  + \frac{\kappa }{2}\left(\alpha^2(T) \psi^2- 2  \alpha(T) \left| \nabla \psi\right|^2 +\left(\nabla^2 \psi\right)^2 \right),
    \]
where
    \[
\widetilde{f}(\psi) := \dfrac{\lambda-\kappa}{2}\psi^2 -  \delta \dfrac{\psi^3}{6} + \dfrac{\psi^4}{12},
    \]
and, therefore, the generalized chemical potential $w$ can be specified as
    \begin{align*}
w &= \partial_\psi \left(\frac{\hat{f}}{T} \right) \, \partial_t \psi  - \nabla\cdot \left(\partial_{\nabla \psi} \left(\frac{\hat{f}}{T}\right) \right) + \nabla^2\left( \partial_{\nabla^2 \psi}\left( \frac{\hat{f}}{T}\right)\right)
    \\
& = \widetilde{f}'(\psi) - \frac{\beta}{T} +\kappa  \alpha^2(T) \psi  + 2 \kappa  \nabla\cdot \left( \alpha(T) \nabla \psi \right) + \kappa  \nabla^4 \psi .
    \end{align*}
In the following subsection, we show that the flux choices for $\bfJ_\e$ and $\bfJ_\psi$ lead to a thermodynamically consistent model.

    \subsection{Entropy Production}
    
    \begin{prop}
Assume that $\Omega$ is a rectangular domain in $\mathbb{R}^2$ and the fields are $\Omega$-periodic. With the constitutive choices above, the entropy is increasing in time, with the following positive entropy production rate:
    \begin{equation}
\partial_t S = \int_\Omega\left\{ M_\psi  \left| \nabla w \right|^2   + \hat{M}_T \left| \nabla\left( \frac{1}{T} \right) \right|^2 \right\} \diff \bfx \ge 0.
     \end{equation}
     \label{eq:entropyProduction}
    \end{prop}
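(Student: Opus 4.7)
The plan is to verify the stated identity by differentiating $S=\int_\Omega\hat{s}\,\diff\bfx$ in time, applying the local thermodynamic relations \eqref{eqn:thermo-4}, and then using integration by parts together with the conservation laws \eqref{eqn:energy-conservation} and \eqref{eqn:mass-con-3} to convert the result into a sum of squares. Since the fields are $\Omega$-periodic, no boundary terms survive any of the integrations by parts.

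The first step is to time-differentiate the pointwise identity $\hat{e}=\hat{f}+T\hat{s}$ and expand $\partial_t\hat{f}$ by the chain rule in its arguments $(\psi,\nabla\psi,\nabla^2\psi,T)$. Using $\hat{s}=-\partial_T\hat{f}$, the two terms proportional to $\partial_tT$ cancel, leaving the Gibbs-like pointwise identity
\begin{equation*}
T\,\partial_t\hat{s}=\partial_t\hat{e}-\partial_\psi\hat{f}\,\partial_t\psi-\partial_{\nabla\psi}\hat{f}\cdot\partial_t\nabla\psi-\partial_{\nabla^2\psi}\hat{f}\,\partial_t\nabla^2\psi.
\end{equation*}
Dividing by $T$ and integrating over $\Omega$, I would then transfer the spatial derivatives off $\partial_t\psi$ by integrating by parts twice in the last term and once in the middle term. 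Because $T$ is independent of $\nabla\psi$ and $\nabla^2\psi$, the combinations $(1/T)\partial_{\nabla\psi}\hat{f}$ and $(1/T)\partial_{\nabla^2\psi}\hat{f}$ coincide with $\partial_{\nabla\psi}(\hat{f}/T)$ and $\partial_{\nabla^2\psi}(\hat{f}/T)=\kappa\nabla^2\psi$, so the resulting integrand collapses exactly to the generalized chemical potential $w$ from \eqref{eqn:mass-con-4}, giving $\int_\Omega(1/T)[\cdots]\,\diff\bfx=\int_\Omega w\,\partial_t\psi\,\diff\bfx$.

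At this point, substituting the mass conservation law $\partial_t\psi=\nabla\cdot(M_\psi\nabla w)$ from \eqref{eqn:mass-con-3} and a further integration by parts produces $\int_\Omega w\,\partial_t\psi\,\diff\bfx=-\int_\Omega M_\psi|\nabla w|^2\,\diff\bfx$. For the remaining piece I would substitute the energy conservation law \eqref{eqn:energy-conservation} with $\bfJ_e=\widehat{M}_T\nabla(1/T)$ and integrate by parts once to obtain $\int_\Omega \partial_t\hat{e}/T\,\diff\bfx=\int_\Omega\widehat{M}_T|\nabla(1/T)|^2\,\diff\bfx$. Adding the two contributions yields the claimed identity, and nonnegativity is immediate from $M_\psi,\widehat{M}_T>0$. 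The main point of care is the bookkeeping of the $1/T$ factor through the product rule in the first integration by parts: it is precisely this factor that makes the definition of $w$ in \eqref{eqn:mass-con-4} (effectively the variational derivative of $\hat{f}/T$, with the $\kappa\nabla^4\psi$ piece split off because its coefficient is $T$-independent) the unique chemical potential compatible with a positive entropy-production rate.
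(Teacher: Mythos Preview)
Your proposal is correct and follows essentially the same architecture as the paper's proof: express $\partial_t S$ in terms of $\partial_t\hat{e}$ and derivatives of $\hat{f}/T$, integrate by parts to recognize the generalized chemical potential $w$ from \eqref{eqn:mass-con-4}, and then substitute the constitutive fluxes. The one notable difference is the first step: the paper expands $\partial_t\hat{s}$ directly via the chain rule and then invokes the thermal compatibility relation~\eqref{eqn:thermal-comp-2} (Proposition~\ref{prop:thermal-comp-2}) to convert $\partial_T\hat{s}\,\partial_tT$ into $\tfrac{1}{T}\partial_T\hat{e}\,\partial_tT$, followed by a second chain-rule expansion of $\partial_t\hat{e}$; you instead time-differentiate the identity $\hat{e}=\hat{f}+T\hat{s}$ and obtain the Gibbs-like relation $T\partial_t\hat{s}=\partial_t\hat{e}-(\hat{f}\text{-derivative terms})$ in one shot, which absorbs the thermal compatibility step. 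This is a mild streamlining rather than a different idea, and from that point on the two arguments coincide.
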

    \begin{proof}
Taking the time derivative of the total entropy and using thermal compatibility~\eqref{eqn:thermal-comp-2} and the energy conservation equation~\eqref{eqn:energy-conservation}, we have
    \begin{align*}
\partial_t S  =& \int_\Omega \left\{ \partial_\psi \hat{s} \, \partial_t \psi + \partial_T \hat{s} \, \partial_t T + \partial_{\nabla \psi} \hat{s} \cdot \partial_t \nabla \psi+ \partial_{\nabla^2 \psi} \hat{s} \, \partial_t(\nabla^2 \psi) \right\} \diff\bfx 
    \\
  =& \int_\Omega \left\{ \partial_\psi \hat{s} \, \partial_t \psi + \frac{1}{T} \partial_T \hat{e} \, \partial_t T + \partial_{\nabla \psi} \hat{s} \cdot \nabla\partial_t \psi + \partial_{\nabla^2 \psi} \hat{s} \, \nabla^2 \partial_t \psi \right\} \diff\bfx 
    \\
  =& \int_\Omega \left\{ \partial_\psi \hat{s} \, \partial_t \psi  + \partial_{\nabla \psi} \hat{s} \cdot \nabla\partial_t \psi + \partial_{\nabla^2 \psi} \hat{s} \, \nabla^2 \partial_t \psi \right\} \diff\bfx \dots 
    \\
 & + \int_\Omega \left\{ \frac{1}{T} \left( -\nabla\cdot\bfJ_e - \partial_\psi\hat{e} \, \partial_t \psi - \partial_{\nabla \psi}\hat{e} \cdot\nabla\partial_t \psi - \partial_{\nabla^2 \psi} \hat{e} \, \nabla^2\partial_t \psi  \right) \right\} \diff\bfx 
    \\
  =& -\int_\Omega \left\{ \partial_\psi \left(\frac{\hat{f}}{T} \right) \, \partial_t \psi  + \partial_{\nabla \psi} \left(\frac{\hat{f}}{T}\right) \cdot \nabla\partial_t \psi + \partial_{\nabla^2 \psi}\left( \frac{\hat{f}}{T}\right) \, \nabla^2 \partial_t \psi  + \frac{1}{T} \nabla\cdot\bfJ_e  \right\} \diff\bfx.
    \end{align*}
Using periodic boundary conditions, integration-by-parts, and mass conservation, we have
    \begin{align*}
\partial_t S  & = \int_\Omega \left\{ -\left( \partial_\psi \left(\frac{\hat{f}}{T} \right) \, \partial_t \psi  - \nabla\cdot \left(\partial_{\nabla \psi} \left(\frac{\hat{f}}{T}\right) \right) + \nabla^2\left( \partial_{\nabla^2 \psi}\left( \frac{\hat{f}}{T}\right)\right) \right) \partial_t \psi  - \frac{1}{T} \nabla\cdot\bfJ_e  \right\} \diff\bfx
    \\
& = \int_\Omega \left\{ w \nabla\cdot \bfJ_\psi  - \frac{1}{T} \nabla\cdot\bfJ_e  \right\} \diff\bfx
    \\
& = \int_\Omega \left\{ -\nabla w \cdot \bfJ_\psi  + \nabla\left( \frac{1}{T}\right)\cdot\bfJ_e  \right\} \diff\bfx.
    \end{align*}
Finally, using the flux conditions that appear in equations~\eqref{eqn:energy-con-2}--\eqref{eqn:mass-con-4}, namely
    \[
\bfJ_\psi = - M_\psi \nabla w \quad \mbox{and} \quad \bfJ_e =\widehat{M}_T \nabla\left( \frac{1}{T} \right),
    \]
we achieve the desired rate of entropy production:
   \[
\partial_t S = \int_\Omega\left\{ M_\psi  \left| \nabla w \right|^2   + \hat{M}_T \left| \nabla\left( \frac{1}{T} \right) \right|^2 \right\} \diff \bfx \ge 0.
    \]
    \end{proof}

    \subsection{PFC Model Recapitulation}

As stated earlier, we use periodic boundary conditions for simplicity. We assume a constant mass mobility coefficient $M_\psi>0$,  and we suppose that the thermal energy mobility coefficient satisfies $\widehat{M}_T = T^2{M}_T$, where ${M}_T>0$ is a constant.
The complete model results in a coupled system of equations that includes a heat-like equation and a PFC-like equation:
   \begin{equation}
       \begin{split}
        M_T\nabla^2 T  &=\left(C_v - \gamma_0'(T) \psi^2 + \gamma_1'(T) |\nabla \psi|^2\right)\partial_tT  -\left( \beta + 2 \gamma_0(T) \psi - 2\gamma_1(T) \nabla \psi \cdot \nabla\right) \partial_t \psi, \\
           \partial_t  \psi  &= M_\psi \nabla^2 \left(
  (\lambda-\kappa)\psi-\delta\dfrac{\psi^2}{2}+\dfrac{\psi^3}{3} - \frac{\beta}{T} +\kappa  \alpha^2(T) \psi   + 2 \kappa  \nabla\cdot \left( \alpha(T) \nabla \psi \right) + \kappa  \nabla^4 \psi  \right)
       \end{split}
       \label{eq:pfcT}
   \end{equation}
where $\gamma_0 (T)$ and $\gamma_1 (T)$ defined as in equation~\eqref{eq:gammas} and $\alpha(T)$ as in equation~\eqref{eq:thermalExpansion}. 
In addition to those parameters that are essential for the standard isothermal PFC model, namely, the parameters $\delta, \lambda, \kappa \geq 0$ and the global average density $\Psi$, our proposed non-isothermal model~\eqref{eq:pfcT} includes mobility constants $M_\psi, M_T>0$, the material constant $\beta>0$ controlling the coupling of the PFC equation with the heat flux, the heat capacity $C_v>0$ and a thermal expansion coefficient $a_1\geq 0$ which enters $\alpha (T)$, encoding how the crystal lattice expands due to temperature fluctuations. We will investigate in the following section~\ref{sec:parameterStudy} how the individual model parameters affect crystal growth within closed systems and extend our formulations to the modeling of open systems in section~\ref{sec:openSystems}. Table~\ref{tab:numerics} summarizes all model parameters for the simulations reported in this paper.

\begin{table}[h!]
    \centering
    \resizebox{\textwidth}{!}{
    \begin{tabular}{llllllllll}
    \hline
         Figure&$\Psi$&  $C_v$& $M_T$ &$M_\psi$& $\beta$& $a_1$ & $T_0$\\
         \hline
        \ref{fig:dendrite} &0.849, 0.151&0.06&0.06&1&0.06 &0.1&0.6\\
        \ref{fig:dynamics} &&$0.05\dots 2$&&&&&\\
            &&0.06&$2\cdot 10^{-3}\dots 0.3$&&&&\\
             &&&0.06&$0.01\dots 100$&&&\\
       \ref{fig:flip} &&&&1&$6\cdot 10^{-4}\dots 0.3$&&\\  
        &&&&&0.06&$0.01\dots 0.3$&\\
            &&&&&&0.1&$0.1\dots 1.8$\\
           \ref{fig:openS}  &-&&&&&0.01&0.6\\
                       \hline  \end{tabular}}
    \caption{Model parameters
    for all the simulations reported in this paper. We set $\lambda = 0.6$, $\kappa = 0.46$, and $\delta = 1$. For the setup according to figure~\ref{fig:dendrite},~\ref{fig:dynamics} and~\ref{fig:flip}  the domain is set to $\Omega = 220\times 256$UC, while for the setup according to figure~\ref{fig:openS}, we choose  $\Omega = 441\times 128$UC with 1 unit cell (UC) representing the smallest repeat unit of $\psi$:  1UC $= [0,p_x]\times [0,p_y]$ and $p_x=2/\sqrt{3} p_y= 4\pi/\sqrt{3}$. Uniform spatial and temporal grids are chosen to guarantee numerical convergence and positive entropy production (for all simulations we choose $\Delta x \approx \Delta y \approx 1$ with $\Delta t = 0.01$ for the setup according to figure~\ref{fig:dendrite},~\ref{fig:dynamics},~\ref{fig:flip} and $\Delta t = 0.1$ for the setup according to figure~\ref{fig:openS}).  Empty table entries read as the row above.}
    \label{tab:numerics}
\end{table}

\section{Numerical benchmark simulations}

Numerical solutions of the proposed model can be obtained efficiently by using a Fourier pseudo-spectral method, enforcing periodic boundary conditions in combination with a linear first-order semi-implicit (IMEX) time-stepping scheme. The numerical algorithm is implemented on graphics process units (GPUs), and we perform the (inverse-) Fourier transforms using a cuFFT library.

\label{sec:benchmarks}
\subsection{Model features and parameter studies}
\label{sec:parameterStudy}
\begin{figure}[h]
\centering
\includegraphics[width=\textwidth]{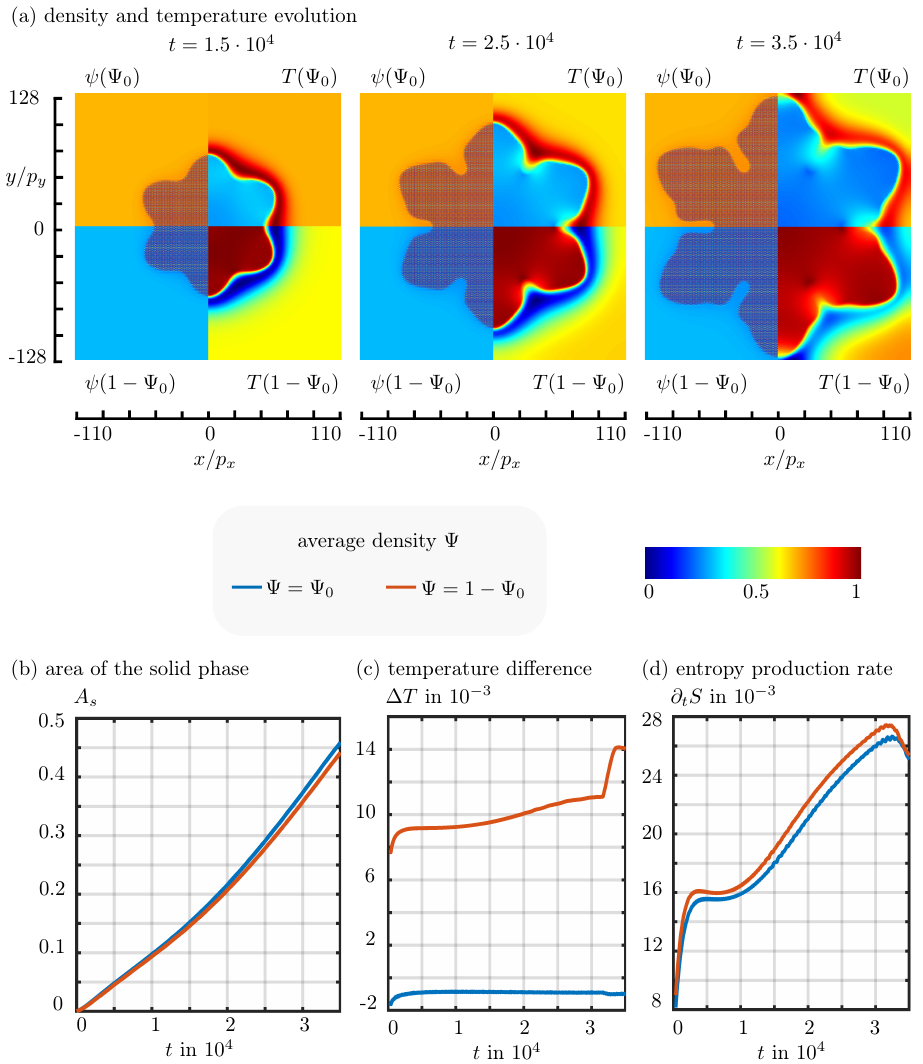}
\caption{Dendritic solidification by the proposed PFC model. The symmetry break for average densities $\Psi=\Psi_0$ and $\Psi=1-\Psi_0$ during dendritic solidification is also illustrated. (a) density ($\psi$) and temperature ($T$) evolution for $\Psi=\Psi_0$ (top) and $\Psi=1-\Psi_0$ (bottom) during growth at different time steps. (b) relative area of the solid phase $A_{\rm s}$ vs. time (c) signed temperature difference $\Delta T$ vs. time. (d) entropy production rate $\partial_t S$ vs. time.   For illustration purposes, we normalized all plotted quantities from 0 to 1, values of $\psi(\Psi_0)$, $\psi(1-\Psi_0)$, $T(\Psi_0)$ and $T(1-\Psi_0)$ vary within the following ranges: $-0.03\leq \psi(\Psi_0)\leq 1.21$, $-0.02\leq \psi(1-\Psi_0)\leq 0.97$, $0.6-6\cdot 10^{-4} \leq T(\Psi_0) \leq 0.6+6\cdot 10^{-4}$ and $0.6-1.0\cdot 10^{-2}\leq T(1-\Psi_0)\leq 0.6+3.6\cdot 10^{-3}$.  Lengths are scaled with the atomic spacings along the $x$- and $y$-axis, $p_x$ and $p_y$.}
	\label{fig:dendrite}
    \end{figure}

We focus on a benchmark setup of a growing crystal in a supersaturated melt with initially constant temperature $T_0$ to investigate the density and temperature evolution in a closed system for different model parameter values, similar to~\cite{punke2022explicit}. 
We choose our model- and material parameters as reported in table~\ref{tab:numerics} and compare the resulting density and temperature evolution for the global average densities $\Psi=\Psi_0(=0.849)$ and  $\Psi=1-\Psi_0(=0.151)$ in figure~\ref{fig:dendrite}, representing crystals with underlying honeycomb and triangular symmetry, respectively. For the chosen values of $\Psi$, the melt is mass-supersaturated, enabling solidification of the growing crystal within a closed system. Representative steps of the dendritic solidification process are shown in figure~\ref{fig:dendrite}(a) with snapshots of the density and temperature fields. 
Relative to the commonly observed enlarged lattice spacing produced by PFC models during growth~\cite{punke2022explicit,punke2023evaluation}, further compression of the crystal lattice is obtained for 
 heat reduction inside the solid phase ($\Psi=\Psi_0$) and a further  expansion of the crystal lattice is obtained for heat production inside the solid phase ($\Psi=1-\Psi_0$), as described by equation~\eqref{eq:thermalExpansion}.
Next to a qualitatively different temperature evolution inside the solid phase for $\Psi=\Psi_0$ and  $\Psi=1-\Psi_0$, an additional (small) symmetry break in the dendritic morphologies is observed (slightly faster solidification dynamics corresponding to smaller temperature gradients for  $\Psi= \Psi_0$, further analysis below).
As first described in~\cite{punke2022explicit}, only in case $a_1=0$, identical energetics for $\Psi=\Psi_0$ and $\Psi=1-\Psi_0$  lead to identical density and temperature morphologies. To further inspect this symmetry break, we quantify the dendritic growth process by calculating the relative area of the solid phase $A_{\rm s}$ (area of the solid phase divided by the domain size) and show its time evolution in figure~\ref{fig:dendrite}(b), illustrating a slightly faster growth of the solid phase for $\Psi=\Psi_0$. Additionally, we calculate the signed temperature difference within the system as $\Delta T(t) = \left| \max_{\,\bfx \in \Omega} T(\bfx,t) -\min_{\,\bfx \in \Omega} T(\bfx,t) \right| \mathrm{sign} \left(  T(\bfx=0,t)-T_0 \right)$, leading to positive values for a solidification process with heat development inside the solid phase (e.g., for $\Psi = 1-\Psi_0$) and to negative values for a solidification process with heat reduction inside the solid phase (e.g., for $\Psi=\Psi_0$). In figure~\ref{fig:dendrite}(c), we show the time evolution of $\Delta T$ and observe a larger temperature difference within the system for $\Psi=1-\Psi_0$, corresponding to slower solidification dynamics.
The different energetics for $\Psi=\Psi_0$ and $\Psi=1-\Psi$ lead to different positive entropy-production rates $\partial_t S$~\eqref{eq:entropyProduction}, which we show in
figure~\ref{fig:dendrite}(d). We note that after the initial transient behavior for $t\lesssim 2\cdot 10^3$, the entropy production rates for $\Psi=\Psi_0$ and $\Psi=1-\Psi_0$ remain unchanged until they increase for $t\gtrsim 1\cdot 10^4$ due to the developing dendritic arms. The interaction of the density and temperature fields with their periodic images leads to a decrease in $\partial_t S$ for $t\gtrsim 3.1\cdot 10^4$, see figure~\ref{fig:dendrite}(d).

We further investigate the role of different parameters entering the proposed PFC model, equation~\eqref{eq:pfcT}, during the dendritic solidification process for $\Psi=\Psi_0$ and $\Psi=1-\Psi_0$ through numerical parameter studies. We vary one of the following parameters while letting the others unchanged: $M_\psi$, $M_T$, $C_v$, $a_1$, $\beta$ and $T_0$, see table~\ref{tab:numerics}. We keep the setup from figure~\ref{fig:dendrite}, calculate $A_{\rm s}$, $\Delta T$ at $t=3.5\cdot 10^4$ for both $\Psi = \Psi_0 $ and  $\Psi = 1-\Psi_0$ and show their dependence on the different model parameters in figure~\ref{fig:dynamics} and~\ref{fig:flip}. 

For larger parameter values of $M_\psi$, $M_T$ and lower parameter values of $C_v$, an increase in $A_{\rm s}$ is obtained, corresponding to a faster solidification of the growing dendrites, see figure~\ref{fig:dynamics}(a). The temperature field inside the growing solid phase can be controlled quantitatively by varying  $M_\psi$, $M_T$, and $C_v$, but does not change its qualitative behavior (heat production inside the solid phase for $\Psi=1-\Psi_0$ and heat reduction inside the solid phase for $\Psi=\Psi_0$), see figure~\ref{fig:dynamics}(b). We conclude that $M_\psi$, $M_T$, and $C_v$ mainly control the time scales of our model. 

By varying the parameter values of $a_1$, $\beta$, and $T_0$, the relative area of the growing solid phase $A_{\rm s}$ shows a maximum for $\Psi=\Psi_0$ and $\Psi=1-\Psi_0$, respectively,  see figure~\ref{fig:flip}(a). The maximal values of $A_{\rm s}$ correspond to parameter values $a_1$, $\beta$ and $T_0$ which lead to vanishing temperature differences $(\Delta T\approx 0)$, see figure~\ref{fig:flip}(b). 
Furthermore, a changed qualitative behavior of the temperature field inside the growing solid phase is obtained for large parameter values of $a_1$, $T_0$, and small parameter values of $\beta$  (heat reduction inside the solid phase for $\Psi=1-\Psi_0$ and heat production inside the solid phase for $\Psi=\Psi_0$), see figure~\ref{fig:flip}(b). This qualitative change in the temperature fields corresponds to a flipped sign of the prefactor for $\partial_t \psi$ within the heat-like equation~\eqref{eq:pfcT} for large parameter values of $a_1$, $T_0$ and small parameter values of $\beta$.

We numerically verified that a positive entropy-production rate is obtained for all simulations in figure~\ref{fig:dendrite},~\ref{fig:dynamics} and~\ref{fig:flip}.

\begin{figure}[h!]
\centering
\includegraphics[width=\textwidth]{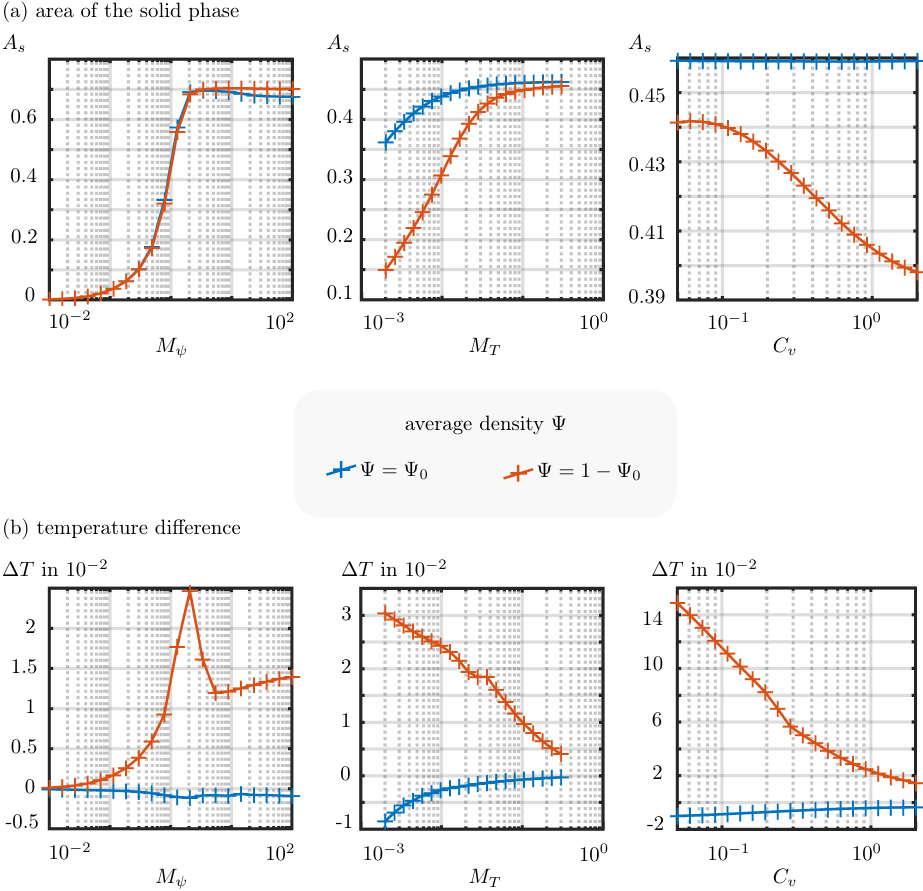}
\caption{Dependence on model parameters -- I. The mobility parameters $M_\psi$, $M_T$ and $C_v$ mainly control the dynamics of the solidification process for the setup in figure~\ref{fig:dendrite}: (a) area of the solid phase $A_{\rm s}$ and (b) temperature difference $\Delta T$, by varying $M_\psi$, $M_T$ and $C_v$.}
	\label{fig:dynamics}
    \end{figure}

\begin{figure}[h!]
\centering
\includegraphics[width=\textwidth]{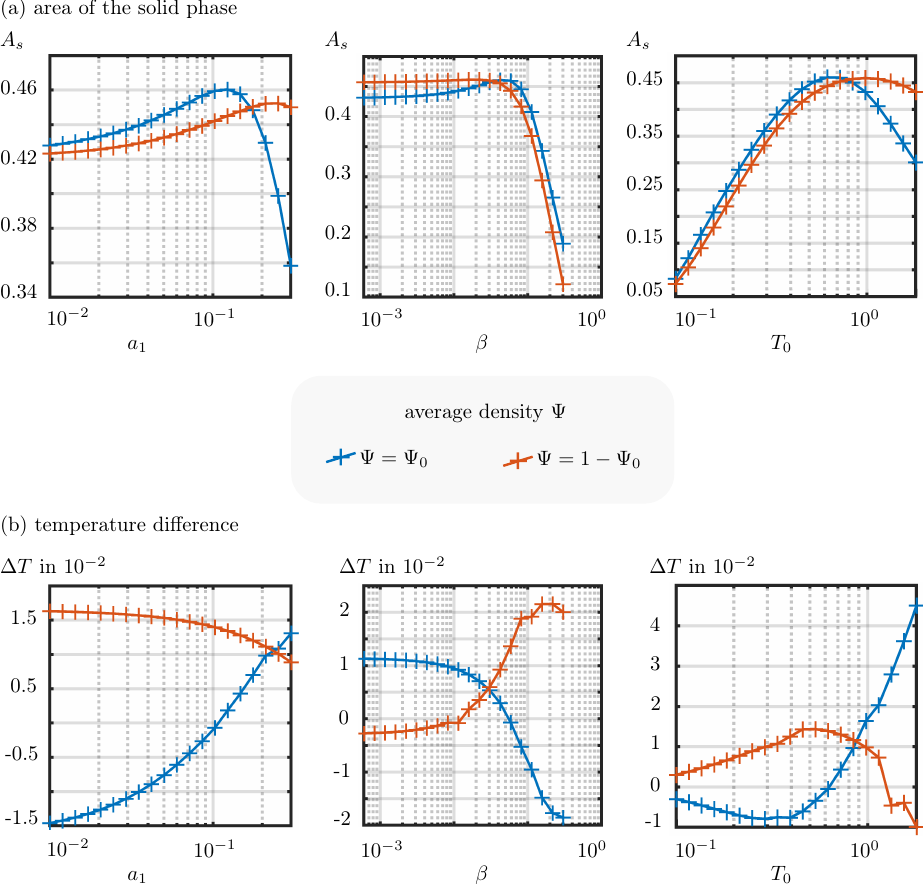}
\caption{Dependence on model parameters -- II. 
The material parameters $a_1$, $\beta$ and $T_0$ control the dynamics of the solidification process and  the qualitative behavior of the associated temperature field for the setup in figure~\ref{fig:dendrite}: (a) area of the solid phase $A_{\rm s}$ and (b) temperature difference $\Delta T$, by varying $a_1$, $\beta$ and $T_0$.}
	\label{fig:flip}
    \end{figure}
    
    \subsection{Open systems}
    \label{sec:openSystems}
To properly describe solidification in open systems, we need to modify our model to mimic external mass and heat fluxes entering the system via the domain boundary.
We will show that the modified model (described below) recovers qualitative behavior typical of the directionality of mass fluxes as well as (de-)stabilization effects of the growth front induced by heat fluxes during solidification, thus describing more realistic solidification conditions than the supersaturation condition typically used for closed systems, see section~\ref{sec:parameterStudy}.

We initialize a perturbed crystal front surrounded by liquid with  $\psi\equiv 0.87$  for $|x|<33p_x$, see figure~\ref{fig:openS}(a), and material parameters from table~\ref{tab:numerics}. The specific choice of the material parameters guarantees a local equilibrium of the solid and liquid crystal phase for $|x|<33p_x$. For  $|x|>193p_x$ we initialize a liquid crystal phase with lower density $\psi\equiv 0.86$. For $33p_x\leq |x|\leq 193p_x$ the density of the liquid crystal phase decreases linearly from $\left. \psi \right|_{|x|=33p_x}=0.87$ to  $\left. \psi\right|_{|x|=193p_x}=0.86$. 
At every time step, we adjust the density for $|x|\geq 193p_x$ setting $\left. \psi\right|_{|x|\geq 193p_x}=0.86$, thus effectively introducing a net mass flux into the system.
This adjustment directs mass towards the central crystal front, analogous to liquid diffusion observed in experimental setups~\cite{chalmers1964principles,dantzig2016solidification,cole1971inhomogeneities}. Since the solid and liquid crystal phases in the central region are initialized in equilibrium, the additional mass flux leads to an incipient solidification at the central crystal front. 

Furthermore, the stability of the growing crystal front is expected to depend on the presence (and sign) of temperature gradients, see e.g. \cite{losert1998evolution}. To test the capabilities of the approach, we therefore consider nonuniform temperatures and explore regimes featuring opposite temperature gradients. Like the density initialization described above, we set $\left.T \right| _{|x|<33p_x}\equiv T_0$ and adjust  the temperature for  $|x|>193p_x$ setting $\left. T\right|_{|x|\geq 193p_x}\equiv T_b$ at every time step. In case the applied temperature gradient is set opposite to the growth direction of the solid phase ($T_0>T_b=0.25$), a destabilization of the evolving solid-liquid interface leading to dendritic solidification is observed, see~figure~\ref{fig:openS}(b1),~(c1). When enforcing a temperature gradient along the growth direction ($T_0<T_b=0.95$), a stabilization of the initially perturbed solid-liquid interface is obtained, leading to a smoother growth front and slower solidification dynamics, see figure~\ref{fig:openS}(b2),~(c2). This result is consistent with theoretical~\cite{du2014phase,tang2009phase,kim2001phase,fu2020thermoelectric} and, importantly, experimental observations~\cite{losert1998evolution,chang2012rapid,cao2011effect,song2011dendritic,yang2011dependence}.

\begin{figure}[h!]
\centering
\includegraphics[width=\textwidth]{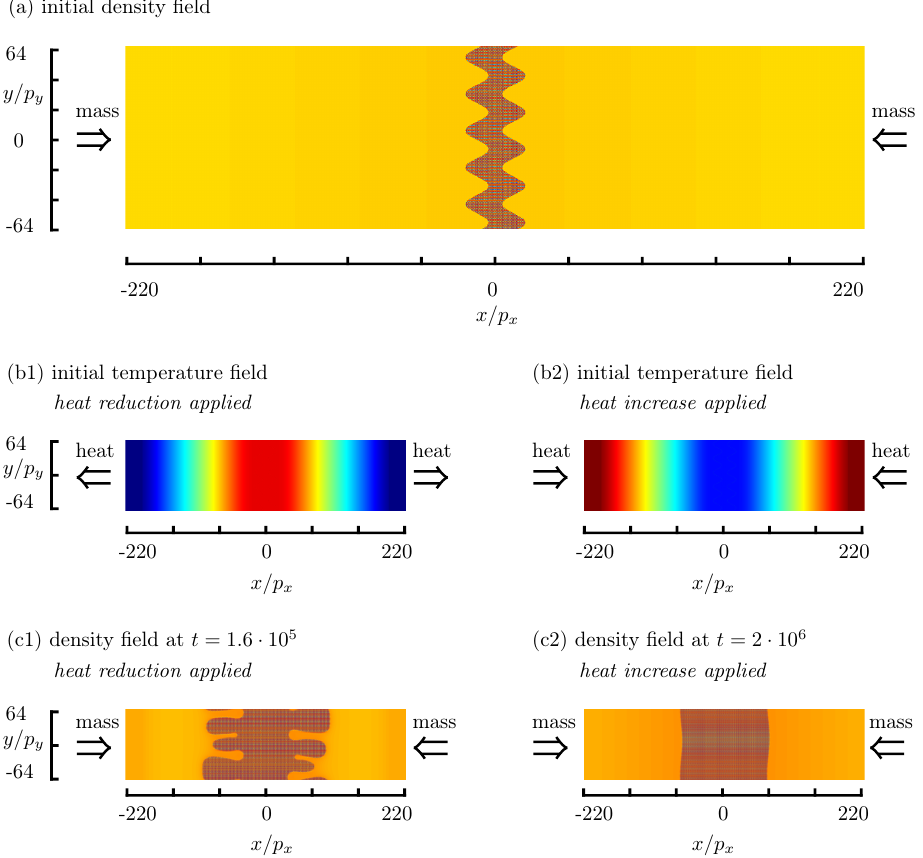}
\caption{Example of growth in an open system and interplay with the heat flux. (a) initial condition corresponding to a perturbed crystal morphology. Mass flux entering the system, inducing crystallization, is considered. (b1) initial temperature field enforcing heat reduction, leading to a dendritic-like solid-liquid interface evolution shown in (c1) by the microscopic density at $t=1.6\cdot 10^5$. (b2) initial temperature field enforcing heat increase, leading to a smooth growing front shown in (c2) by the microscopic density at $t=2\cdot 10^6$.
For illustration purposes, we normalized all plotted quantities from 0 to 1 using the color-scale from figure~\ref{fig:dendrite}, values of $\psi$ and  $T$ vary within the following ranges: (a) $0.41\leq \psi\leq 1.09$, (b1) $0.25\leq T \leq 0.6$, (c1) $0.02\leq \psi\leq 1.21$, (b2) $0.6\leq T \leq 0.95$, (c2) $0.03\leq \psi\leq 1.21$.  Lengths are scaled with the atomic spacings along the $x$- and $y$-axis, $p_x$ and $p_y$.}
	\label{fig:openS}
    \end{figure}

\section{Conclusion}
\label{sec:conclusions}
We introduced a non-isothermal PFC model that accounts for lattice expansion and satisfies a positive entropy-production property. Thermodynamic consistency is shown to follow by construction from the chosen formulation of the free energy as well as heat and mass fluxes.
Through numerical benchmark simulations, we showcase how the parameters entering the model control temperature and density evolution during dendritic solidification processes and further verify the positive entropy-production property numerically.
Finally, we extend our model formulation to include the description of external mass and heat fluxes that enter the system,  thus mimicking the physics of open systems.
This work not only extends the existing non-isothermal PFC formulations~\cite{wang2021thermodynamically,punke2022explicit} but also allows for thermodynamic consistent PFC simulations of realistic solidification conditions incorporating thermal expansion of the crystal lattice and including heat fluxes. Therefore, our proposed model sets the ground for a comprehensive approach to crystal growth at diffusive time scales. The proposed simulations also offer accessible benchmarks for PFC and other solidification modeling and simulation approaches. 
The extension to three-dimensional systems can be envisaged with the aid of efficient numerical methods~\cite{TANG2011146,Yamanaka2017} or coarse-grained formulations of the PFC, like the so-called amplitude expansion of the PFC model~\cite{goldenfeld2005renormalization,athreya2006renormalization,salvalaglio2022coarse}.
Additionally, explicitly modeling elastic relaxation can provide a better description of competitive time scales, particularly regarding elastic relaxation and diffusive dynamics~\cite{skogvoll2022hydrodynamic,HeinonenPRL2016,SkaugenPRL2018}, along with the presence of heat flux.

\addcontentsline{toc}{section}{Data availability statement}
\section*{Data availability statement}
The data that support the findings of this study will be made openly available in suitable repositories in the final version.

\addcontentsline{toc}{section}{Acknowledgements}
\section*{Acknowledgements}
MP and MS acknowledge support from the Emmy Noether Programme of the Deutsche Forschungsgemeinschaft (DFG, German Research
Foundation) – Project no. 447241406. AV and MS acknowledge support from the Deutsche Forschungsgemeinschaft (DFG, German Research
Foundation) within FOR3013 – Project no. 417223351. SMW gratefully acknowledges support from the US National Science Foundation under grant NSF-DMS 2309547. Computing resources have been provided by the Center for Information Services and High-Performance Computing (ZIH), the \href{www.nhr-verein.de/unsere-partner}{NHR Center} of TU Dresden.

\addcontentsline{toc}{section}{ORCID iDs}
\section*{ORCID iDs}
Maik Punke \orcidlink{0000-0002-3564-7942} \href{https://orcid.org/0000-0002-3564-7942}{https://orcid.org/0000-0002-3564-7942}\\
Marco Salvalaglio \orcidlink{0000-0002-4217-0951} \href{https://orcid.org/0000-0002-4217-0951}{https://orcid.org/0000-0002-4217-0951} \\
Axel Voigt \orcidlink{0000-0003-2564-3697} \href{https://orcid.org/0000-0003-2564-3697}{https://orcid.org/0000-0003-2564-3697}\\
Steven M. Wise \orcidlink{0000-0003-3824-2075} \href{https://orcid.org/0000-0003-3824-2075}{https://orcid.org/0000-0003-3824-2075}

\addcontentsline{toc}{section}{References}
\section*{References}
\bibliographystyle{iopart-num-mod} 
\bibliography{references.bib}

\end{document}